\pdfoutput=1
\documentclass{sig-alternate-05-2015}

\usepackage{balance}  
\usepackage{graphics} 
\usepackage{txfonts}
\usepackage{times}    
\usepackage{hyperref}
\usepackage{color}
\usepackage{textcomp}
\usepackage{booktabs}
\usepackage{ccicons}
\usepackage{todonotes}

\makeatletter
\def\url@leostyle{%
  \@ifundefined{selectfont}{\def\UrlFont{\sf}}{\def\UrlFont{\small\bf\ttfamily}}}
\makeatother
\urlstyle{leo}

\def\pprw{8.5in}
\def\pprh{11in}

\setlength{\paperwidth}{\pprw}
\setlength{\paperheight}{\pprh}
\setlength{\pdfpagewidth}{\pprw}
\setlength{\pdfpageheight}{\pprh}

\definecolor{linkColor}{RGB}{6,125,233}
\hypersetup{%
  pdftitle={SIGCHI Conference Proceedings Format},
  pdfauthor={LaTeX},
  pdfkeywords={rank, influence, social network, centrality, reverse engineering},
  bookmarksnumbered,
  pdfstartview={FitH},
  colorlinks,
  citecolor=black,
  filecolor=black,
  linkcolor=black,
  urlcolor=linkColor,
  breaklinks=true,
}

\usepackage{etoolbox}
\makeatletter
\def\@copyrightspace{\relax}
\patchcmd{\maketitle}{\@copyrightspace}{}{}{}
\makeatother

\usepackage{acmcopyright}
\CopyrightYear{2017} 
\setcopyright{acmcopyright}
\conferenceinfo{CSCW '17,}{February 25-March 01, 2017, Portland, OR, USA}
\isbn{978-1-4503-4335-0/17/03}\acmPrice{\$15.00}
\doi{http://dx.doi.org/10.1145/2998181.2998257}

\makeatletter
\newcommand{\removelatexerror}{\let\@latex@error\@gobble}
\makeatother

\usepackage{tikz}
\usetikzlibrary{shapes,snakes}
\usetikzlibrary{calc,trees,positioning,arrows,fit,shapes,calc}
\usepackage{pstricks,pst-node,pst-tree}

\usepackage{url}

\newcommand{\ie}{\emph{i.e.}\xspace}
\newcommand{\eg}{\emph{e.g.}\xspace}

\usepackage[latin1]{inputenc}
\usepackage[francais]{babel}

\usepackage{amssymb}
\usepackage{amsmath}
\usepackage{graphicx}
\usepackage[vlined,linesnumbered]{algorithm2e}
\hyphenation{op-tical net-works semi-conduc-tor}
\usepackage{subcaption}

\usepackage{tikz}

\usepackage{color}

\usepackage{ulem}

\newtheorem{definition}{Definition}
\newtheorem{lemma}{Lemma}

\newtheorem{theorem}{Theorem}

\usepackage{flushend}

\begin{document}

\title{Uncovering Influence Cookbooks: Reverse Engineering the Topological Impact in Peer Ranking Services}

  \numberofauthors{2}
  \author{%
    \alignauthor{Erwan Le Merrer}\\
      \affaddr{Technicolor, France}
    \alignauthor{Gilles Tr\'edan}\\
      \affaddr{LAAS/CNRS, France}
  }

\maketitle

\begin{abstract}

Ensuring the early detection of important social network users is a
challenging task. Some peer ranking services are now well established,
such as PeerIndex, Klout, or Kred. Their function is to rank users
according to their influence. This notion of influence is however
abstract, and the algorithms achieving this ranking are
opaque. Following the rising demand for a more transparent web, we
explore the problem of gaining knowledge by reverse engineering such
peer ranking services, with regards to the social network topology
they get as an input.  Since these services exploit the online
activity of users (and therefore their connectivity in social
networks), 
we provide a precise evaluation of how topological metrics of the social network impact the final user ranking.
Our approach
is the following: we first model the ranking service as a black-box
with which we interact by creating user profiles and by performing
operations on them. Through those profiles, we trigger some slight
topological modifications. By monitoring the impact of these
modifications on the rankings of those profiles, we infer the weight of
each topological metric in the black-box, thus reversing the service
influence cookbook.
\end{abstract}


\section{Introduction}
\label{sec:introduction}
The need for an increased transparency in the functioning of
web-services has recently arised, motivated by various use cases such
as privacy or copyright control. For example, work such
as~\cite{chaintreau} proposes to retrieve which piece of information
in a user-profile triggered a particular advertisement. Goal is thus
to infer the internals of black-box services provided by companies on
the web.  Klout or PeerIndex propose to rank users based on their
behavior on social networks (using their social connectivity and
activity). They nevertheless keep secret the algorithms and parameters
used for this ranking\footnote{Those services may provide a score as
an output. Clearly, reversing a ranking function is harder than
reversing a score, as you can obtain a ranking from scores, while the
opposite is impossible.}. This motivated some users to try reversing
their internals~\cite{revklout}. Sometimes information leaks about
some of the ingredients in those hidden recipes; the CEO of PeerIndex for instance admitted to
leverage Pagerank\footnote{\small blog post
on Quora} (and thus graph topological-metrics),
 to compute user intrinsic
influence in a network.  Such an understanding of which metrics are
involved is also of a particular interest for information sharing and
coordination, as it has been shown that some centrality metrics
correlate with the actual ability of network actors to coordinate
others~\cite{cscw, coord}. This knowledge can then serve to assess if
the centrality metric leveraged by the ranking function makes the
ranking service relevant to dispatch roles for given tasks for
example~\cite{coord}.

Nevertheless, reverse engineering such black-boxes is a challenging
task. Indeed, in this web-service paradigm, the user only has access
to the output of the algorithm, and cannot extract any
side-information. Moreover, in many cases such as in peer ranking
services, the user can only take action on a limited part of the
algorithm input.  Motivated by this challenge for transparency, we ask
the following question: \textbf{can a user infer, from the results
  returned by such peer ranking algorithms, what are the topological
  metrics in use, and to which extent?}

We first introduce the ranking service we consider and model our
actions, before warming-up on a toy example. We then generalize the
example and provide a construction to identify the use of a single
arbitrary centrality among a given set of candidates. Then, we assume
that the ranking can be produced by a linear combination of multiple
centralities, and give a generic reverse engineer approach. We
conclude by illustrating such a generic approach on a concrete
scenario, before giving perspectives.

\section{Model \& Warm Up: Reversing One Centrality}

Let us model a social web-service. Each user is represented by
a vertex $v$, together with a set of (possibly unknown to the user) attributes $a(v)$. To
interact with the web-service, users have access to a finite set of actions
$A$. We consider two types of actions: $i)$ \emph{single} actions that only involve a single
user (\eg, posting a message on a wall) and might change part of the user
profile $a(v)$. And $ii)$, \emph{pair} actions that involve a pair of users (\eg, following,
declaring or deleting a ``friendship'' relation). These actions impact the network of relations
among users, that we capture as a graph $G_{\infty}(V,E)$, with $V$ and $E$ being respectively the set of vertices and edges in that graph.

Among the features of this web-service, a ranking of the users is
available. While the internals of the ranking methodology are unknown, each user
accesses its output, that is her own ranking at any time. Let $f$ be the ranking
black-box function.  More specifically, $f$ takes as input the graph
$G_{\infty}$ along with the attributes of its nodes (that is $\{a(v),\forall
v\in V\}$) and assigns each node a score $f(i,a(i)),i\in G_\infty$ from
  which is derived an observable ranking $r$ 
of all graph nodes such that: $\forall i,j \in V(G_\infty)^2, i>_rj$ iff
$f(i,a(i))>f(j,a(j))$, that is: ``node
$i$ is more important (or ``influent'') than node $j$''.

The objective of this paper is to gain knowledge on $f$, and more
specifically to evaluate the impact of each action in $A$ on users
rankings. For a given user, the two main difficulties are that first,
she witnesses only a limited part of the input of $f$ (typically her own
friends in the social graph). Second, the output of $f$ is sparse, as
it only provides nodes with a total order relation (\eg, user $x$ is
better ranked than her neighbor $y$).
In order to try reversing $f$, we assume the \textit{querying} user 
is able to create a set of profiles $V_a$ in
the social service, and have those profiles issue any single action of
$A$. She is also able to achieve any pair action between two profiles
of $V_a$, therefore updating the subgraph of $G_\infty$ induced by
nodes of $V_a$. Those two operations are conducted through API calls,
as it is \eg, observed in practice in Facebook~\cite{sybil}.

As a warm up, let us assume that $f$ leverages exactly one of the following
classic centralities $C_{base} = \{degree,$ $eccentricity,$
$betweenness,$ $Pagerank,$ $closeness \}$~\cite{centralities}. To
determine which one is in use, one user wants to build a small
\textit{query graph} $G_Q$, attached to $G_{\infty}$ (then $G_{\infty} \gets
G_{\infty} \cup G_Q$), in order
to reverse $f$.  
To start our analysis of $f$ on a clean basis, the user
  creates nodes $\in G_Q$ that are strictly identical up to their
  connectivity (\ie, their attributes in $a(v)$ regarding single
  actions such as tweets or posted comments are empty).

\begin{figure}[t!]
  \centering
\begin{tikzpicture}[shorten >=1pt,auto]
  \tikzstyle{vertex}=[circle,fill=black!25,minimum size=17pt,inner sep=0pt]
  \tikzstyle{graph}=[circle,fill=black!10,minimum size=23pt,inner sep=0pt]
  \node[graph] (g) at (0,0.8) {$G_{\infty}$};
  \node[vertex] (n1) at (0,0) {$a_1$};
  \node[vertex] (n2) at (-0.7,-0.6) {$a_2$};
  \node[vertex] (n3) at (0.7,-0.6) {$a_3$};
  \node[vertex] (n4) at (-1.,-1.4)  {$a_4$};
  \node[vertex] (n5) at (0,-1.4)  {$a_5$};
  \foreach \from/\to in {n1/g,n2/n1,n3/n1,n4/n2,n5/n2}
    \draw (\from) -- (\to);
\end{tikzpicture}
  \caption{A small query graph $G_Q$, solving the single centrality
    reverse engineering problem for plausible set $C_{base}$.}
  \label{fig:sub1}
\end{figure}

\begin{lemma} The query graph $G_Q$ depicted on Figure~\ref{fig:sub1}, of $5$ nodes, is sufficient to reverse engineer a function $f$ that is based on a single centrality $\in C_{base}$, relatively to the other centralities in the same set $C_{base}$.
\end{lemma}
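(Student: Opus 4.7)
My plan is to show that each of the five candidate centralities in $C_{base}$ induces a strictly different ordered partition on the five query nodes $\{a_1,\ldots,a_5\}$ of $G_Q$. Since the attributes $a(v)$ of these profiles are empty and identical, the relative ranks $f$ assigns to them depend only on their topological position in $G_\infty \cup G_Q$; hence, if the five centralities produce pairwise distinct patterns on $G_Q$, the querying user can identify which one $f$ implements just by reading off how her five profiles are ordered in the output.

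The central observation that keeps the work tractable is that $a_1$ is the unique cut vertex between $G_\infty$ and $G_Q \setminus \{a_1\}$: every distance-, path-, or walk-based quantity for $a_2,\ldots,a_5$ therefore factors into the local $G_Q$ geometry plus a single global term flowing through $a_1$ (its distance sum to $G_\infty$, its eccentricity in $G_\infty$, the number of $G_\infty$-to-$G_Q$ shortest paths, its inherited PageRank mass, and so on). Direct computation then yields, under the mild hypothesis that $|V(G_\infty)|$ is large enough for the external contributions to dominate the $G_Q$-internal terms: \emph{degree} gives $\{a_1,a_2\} > \{a_3,a_4,a_5\}$; \emph{eccentricity} gives $a_1 > \{a_2,a_3\} > \{a_4,a_5\}$; \emph{betweenness} gives $a_1 > a_2 > \{a_3,a_4,a_5\}$; and \emph{closeness} gives the strict chain $a_1 > a_2 > a_3 > \{a_4,a_5\}$. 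These four patterns are distinct by inspection of where the ties sit, either in the number of equivalence classes (2, 3, 3, 4 respectively) or, for eccentricity versus betweenness, in which pair is tied.

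The delicate step, and the main obstacle, is \emph{PageRank}. For an undirected graph PR lies close to a degree-based ranking, so I would write the recursion $PR(v) = (1-d)/n + d\sum_{u \sim v} PR(u)/\deg(u)$ restricted to $G_Q$, absorb the contribution from the external neighbours of $a_1$ into a single positive scalar $C$, and eliminate $PR(a_3), PR(a_4), PR(a_5)$ to reduce the system to two linear equations in $PR(a_1)$ and $PR(a_2)$. Solving and back-substituting gives the exact ordering; the care required is in checking that it departs both from the high-damping limit (the pure degree pattern) and from the closeness pattern (the same strict chain), which is where the assumptions on $G_\infty$ (connectedness and sufficient asymmetry with respect to $G_Q$) enter. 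Once the five rank signatures have been established and confirmed pairwise distinct, the construction is complete: the user attaches $G_Q$ to her ego-network, reads back the induced order on $\{a_1,\ldots,a_5\}$, and matches it to the unique centrality in $C_{base}$ producing that pattern.
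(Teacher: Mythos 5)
Your overall strategy is the same as the paper's: attach $G_Q$, compute the rank signature that each of the five centralities induces on $\{a_1,\ldots,a_5\}$, and check that the five signatures are pairwise distinct. Your signatures for degree, eccentricity, betweenness and closeness agree with the paper's ($a_1 =_r a_2 >_r a_3 =_r a_4 =_r a_5$; $a_1 >_r a_2 =_r a_3 >_r a_4 =_r a_5$; $a_1 >_r a_2 >_r a_3 =_r a_4 =_r a_5$; $a_1 >_r a_2 >_r a_3 >_r a_4 =_r a_5$). The problem is PageRank: you correctly identify it as the delicate case, sketch the linear system, and then stop --- you never state the resulting ordering nor verify that it differs from the other four. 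Since the lemma stands or falls on that fifth signature, this is a genuine gap, not a routine omission. The paper closes it by explicit (numerical) computation, obtaining $a_2 >_r a_1 >_r a_4 =_r a_5 >_r a_3$, which is visibly distinct from all the others.

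Moreover, the step you skipped is exactly where the claim can fail. Writing $PR(a_3) = (1-d)/n + d\,PR(a_1)/3$ and $PR(a_4) = PR(a_5) = (1-d)/n + d\,PR(a_2)/3$, the relative order of $a_3$ versus $a_4,a_5$ is decided by the sign of $PR(a_2)-PR(a_1)$, which in turn depends on how much mass $a_1$ inherits from its attachment point in $G_\infty$. If $a_1$ happens to hook onto a node that feeds it enough PageRank so that $PR(a_1) > PR(a_2)$, the PageRank signature collapses to $a_1 >_r a_2 >_r a_3 >_r a_4 =_r a_5$ --- identical to the closeness signature --- and $G_Q$ no longer discriminates. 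So the ``sufficient asymmetry with respect to $G_\infty$'' hypothesis you mention in passing is not a technicality you can wave at: it (or the concrete computation on the actual $G_\infty \cup G_Q$, as the paper does via simulation) is what makes the lemma true, and your proof needs to either carry out the PageRank calculation and state under which conditions $PR(a_2) > PR(a_1)$ holds, or verify the ordering directly on the instance at hand.
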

~\\
\begin{proof}
The proof requires showing that such $G_Q$ is able to discriminate the
centralities considered in the set $C_{base}$. Consider graph $G_{\infty}
\cup G_Q$ on Figure~\ref{fig:sub1}. $G_Q$ nodes are given the following ranking, for centralities in 
$C_{base}$\footnote{\small  we conducted numerical
  simulations using the networkx library: \url{https://networkx.github.io/}}:
$<degree, [a_1 =_r a_2 >_r a_3 =_r a_4 =_r a_5]>$, $<eccentricity, [a_1 >_r a_2 =_r a_3 >_r a_4 =_r a_5]>$, $<betweenness, [a_1 >_r a_2 >_r a_3 =_r a_4 =_r a_5]>$, $<Pagerank, [a_2 >_r a_1 >_r a_4 =_r a_5 >_r a_3]>$, $<closeness, [a_1 >_r a_2 >_r a_3 >_r a_4 =_r a_5]>$.  
All rankings are indeed unique, thus allowing to
     designate the centrality used, by user observing 
     rankings produced by $f$ at $G_Q$ nodes she controls.
\end{proof}

Note that 
$G_Q$ is not the unique graph solving this problem instance.

There are obvious interests in minimizing the size of the constructed
query graph: first, constructing a bigger graph requires a longer
time, especially if actions on the service platform are rate-limited
on operations. Second, the bigger the query, the easier it can be
detected by the social service. Note that the graph $G_Q \setminus
a5$, of size $4$ is not a solution, as $degree$ and $betweenness$
produce the same $[a_1 >_r a_2 >_r a_3 =_r a_4 ]$ ranking, as for both
fringe nodes $a_3$ and $a_4$, $betweenness$ is $0$, and $degree$ is
$1$.

\section{General Discrete centrality discrimination}
\label{sec:discr-centr-discr}

We now generalize the reversing logic used on the previous
  example to a set $C$ of arbitrary centralities, possibly in use
  nowadays.  Furthermore, we extend the notion of centrality to the
  one presented in~\cite{borgatti2006graph}: a centrality is
  \textbf{any node-level measure}.

We first draw two observations: discrimination is made by the ranking,
therefore to distinguish between $d$ different centralities one
requires at least $d$ different rankings. Thus $\vert G_Q \vert ! \geq
d$. Second, the discrimination in this set of centralities is made
thanks to graphs we call \textit{delta-reversal graphs}.

\begin{definition}[Delta-reversal graphs]
  $\Delta_{XY}$ is the set of graphs such that $\forall G \in  \Delta_{XY},
  \exists i,j \in V(G) $ s.t. $f_X(G,i)< f_X(G,j) \wedge f_Y(G,i)> f_Y(G,j) $. 
\end{definition}
A delta-reversal graph for two centralities $X$ and $Y$ is a graph
where the ranking $r$ induced by using the ranking provided by $f_X$
(\ie, by a function $f$ only relying on centrality $X$) on the nodes of $G$
would be different than the ranking induced by $f_Y$. Any such graph
would thus allow to discriminate between $X$ and $Y$ being used as
$f$\footnote{Examples of discriminating graphs are known in the
  literature, as they serve as motivation for introducing new
  centralities: see for instance \cite{citeulike:2206413}, where a
  graph is presented that discriminates \textit{random walk betweenness} from
  classic betweenness centrality.}.  The following property is a very handy property for using delta-reversal
graphs.

\begin{definition}[Centrality $k$-locality]
  Let $X$ a centrality. $X$ is said $k$-local if $\forall G_1,G_2$ graphs$,
  \forall i\in V(G_1),j\in V(G_2), V^k(i,G_1)=V^k(j,G_2) \Rightarrow
  f_X(i,G_1)=f_X(j,G_2) $, where $V^k(i,G)$ is the graph induced by the $k$-hop
  neighborhood of $i$ in $G$.
\end{definition}

The intuition is the following: a $k$-local centrality only considers
the $k$-hop neighborhood of a node when assessing it's
importance. This can be seen as the ``scope'' of a centrality: any
topological modification beyond this scope leaves the node importance
unchanged. This can be exploited to join Delta-reversal graphs into
one single query graph while maintaining their discriminating
power. Following this intuition, the following definition states
an important property of those graphs.

\begin{definition}
  Let $G $ a $\Delta_{XY}$ graph, and $dist(i,j)$ the hop-distance
  between nodes $i$ and $j$. If $\exists i,j,k \in G $ s.t. $
  dist(i,k)>\ell \wedge dist(j,k)>\ell \wedge f_X(G,i)< f_X(G,j) \wedge f_Y(G,i)> f_Y(G,j) $ then $G$ is $\ell$-discriminating. $k$ is called
  an anchor.
\end{definition}

\begin{figure}[t!]
 \removelatexerror
  \centering
  \begin{algorithm}[H]
    \KwData{$G_{\infty}$, a target node $a\in V(G_{\infty})$, the set $C$ of suspected centralities ($|C|=d$), $D$ the set of pairwise discriminating graphs for set $C$}
    \KwResult{The centrality $X$ in use in $f$}
    
    $\forall G_{XY} \in D$, let $i_{XY}$, $j_{XY}$ s.t. $f_X(i_{XY}) > f_X(j_{XY}) \land f_Y(i_{XY}) < f_Y(j_{XY})$\;

    \textit{//Building and attaching the general query graph to $G_{\infty}$}\\
    \For{$\forall G \in D$}{ 
      $V(G_{\infty}) \gets V(G_{\infty}) \cup V(G)$\;
      $E(G_{\infty}) \gets E(G_{\infty}) \cup E(G) \cup (a, anchor(G))$\;
    }
    $r \gets r(f(G_{\infty})$\;

    Let $M$ be a $d \times d$ matrix initialized to false\;

    \textit{//Retrieving the centrality in use in $f$}\\
    \For{$a=1 \dots d$}{
          \For{$b=a+1 \dots d$}{
            $M_{a,b} = i_{X_aX_b} >_r j_{X_aX_b}$\;
          }
    }
    Let $s$ be s.t. $\forall k=1 \dots d, M_{s,k}=true$\;

    \Return{$X_s$}\;
    \caption{A reverse engineering algorithm, identifying the centrality in use in arbitrary centrality set $C$.}
    \label{algo-general}
  \end{algorithm}
\end{figure}

\subsection{Combining Delta-reversal graphs}
\label{sec:comb-diff-graphs}

We now explain how to combine pairwise discriminating graphs into a single query graph.
\begin{lemma}
\label{lem:comb-2delta-revers}  Let $X,Y,Z$ three centralities and let $k$ their maximum locality. Then $\forall
G_1 \in \Delta_{X,Y}, G_2 \in \Delta_{X,Z}, G_3\in \Delta_{Y,Z}$, if
all these graphs are $\ell>k$-discriminating,  then $G_S= (V(G_{1} \cup
G_{2}\cup G_{3})\cup\{a\}, E(G_{1} \cup
G_{2} \cup G_3)\cup\{(a,m_1),(a,m_2),(a,m_3)\}) \in \Delta_{XYZ}$.
\end{lemma}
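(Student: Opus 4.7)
The plan is to leverage the $k$-locality of the three centralities to show that, after gluing the three delta-reversal graphs onto the common vertex $a$, the witness pairs of each $G_i$ keep their discriminating power inside $G_S$. The overall argument reduces the claim that $G_S \in \Delta_{XYZ}$ to three independent sub-claims of the form $G_S \in \Delta_{X_iX_j}$, each obtained by transporting the strict inequalities from $G_i$ to $G_S$ unchanged.

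First, I would fix, for each $i\in\{1,2,3\}$, the witness nodes $u_i,v_i$ and the anchor $m_i$ supplied by the $\ell$-discriminating hypothesis for $G_i$, so that $dist_{G_i}(u_i,m_i)>\ell$, $dist_{G_i}(v_i,m_i)>\ell$, and the two relevant centralities are reversed between $u_i$ and $v_i$. The attached edges $(a,m_1),(a,m_2),(a,m_3)$ turn the three $m_i$'s into the only bridges connecting the three pieces of $G_S$.

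The main technical step is the following claim: for every such witness $w\in\{u_i,v_i\}$, the induced $k$-hop neighborhood $V^k(w,G_S)$ equals $V^k(w,G_i)$. I would prove this in two parts. (i) No vertex outside $V(G_i)$ lies within $k$ hops of $w$ in $G_S$: any $w$-to-outside path must traverse $m_i$ and then $a$, so its length is at least $dist_{G_i}(w,m_i)+1>\ell+1>k$. (ii) Distances among vertices of $V(G_i)$ are unchanged in $G_S$: any alternative path routed through $a$ would have to both enter and exit $V(G_i)$ through the unique bridge $m_i$, hence visit $m_i$ twice, and is therefore strictly longer than the corresponding path that stays within $G_i$. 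Combining (i) and (ii) shows that both the vertex set and the induced edges of $V^k(w,G_S)$ coincide with those of $V^k(w,G_i)$. Applying the $k$-locality of $X$, $Y$ and $Z$ then yields $f_X(w,G_S)=f_X(w,G_i)$, and similarly for $Y$ and $Z$.

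Once neighborhood preservation is established, the conclusion is immediate: the strict inequalities witnessing $G_1\in\Delta_{X,Y}$, $G_2\in\Delta_{X,Z}$ and $G_3\in\Delta_{Y,Z}$ lift verbatim to $G_S$ via the pairs $(u_1,v_1)$, $(u_2,v_2)$, $(u_3,v_3)$, so $G_S$ discriminates each of the three pairs and therefore lies in $\Delta_{XYZ}$. The main obstacle I expect is the rigorous verification of part (ii) of the neighborhood-preservation claim, i.e.\ ruling out that any detour through $a$ could shorten a distance internal to $G_i$; the "must revisit $m_i$" observation is what makes this go through, and it is precisely why the anchors are chosen to lie strictly more than $k$ hops away from the witnesses.
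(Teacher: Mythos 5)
Your proposal is correct and follows essentially the same route as the paper's proof: fix the witness pair of each pairwise delta-reversal graph, use the $>\ell>k$ distance to the anchor to argue that each witness's $k$-hop neighborhood is unchanged in $G_S$, invoke $k$-locality to transfer the centrality values, and conclude the three pairwise reversals survive in $G_S$. The only difference is that you spell out (via the ``paths must revisit $m_i$'' argument) the neighborhood-preservation claim $V^k(w,G_S)=V^k(w,G_i)$, which the paper simply asserts from $d(i_1,m_1)>k$.
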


\begin{proof}
  Since $G_1$ is discriminating, $m_1$ exists.  Let $i_1,j_1$  the
  corresponding  anchor nodes. Let $\sigma_X(G_1), \sigma_Y(G_1)$ the ranks
  of centralities $X,Y$. Assume w.l.o.g. that $\sigma_X(G_1,i_1)>\sigma_X(G_1,j_1)$
  and yet $\sigma_Y(G_1,i_1)<\sigma_Y(G_1,j_1)$. Consider $i_1$: we have
  $d(i_1,m_1)>k$ and thus $V^k(i_1,G_1)=V^k(i_1,G_S)$. Thus
  $f_X(i,G_1)=f_X(i,G_S)$. As the same applies for $j_1$ we deduce that
  $\sigma_X(G_S,i_1)>\sigma_X(G_S,j_1)$ and yet
  $\sigma_Y(G_S,i_1)<\sigma_Y(G_S,j_1)$.

  Thus $G_S \in \Delta_{XY}$. A similar reasoning holds for $i_2,j_2$ and
  $i_3,j_3$ thus $G_S \in \Delta_{XZ} \cap \Delta_{YZ}\cap \Delta_{XZ}
  =\Delta_{XYZ} $.
\end{proof}

This lemma is very useful, as it provides us with a way to create
discriminating graphs from pair of known ones. 
The following lemma finally generalizes the construction:

\begin{lemma}
  \label{lem:comb-delta-revers}
  Let $C$ a set of $d$ centralities and let $k$ their maximum locality. Let
  $D=\{G_{AB} \in \Delta_{AB}, \forall A,B\neq A \in C^2\}$, a set containing a pairwise
  discriminating graph for each pair of centrality in $C$.  If
all these graphs are $\ell>k$-differentiated,  then let $G_S=(V(\cup_{G\in D}
G)\cup\{a\}, E(\cup_{G\in D} G)\cup\{(a,m_{AB},\forall A,B\neq A \in C\})$, where
$m_{XY}$ is an anchor of $G_{XY}\in D$. Then
$G_S \in \Delta_{C}$.
\end{lemma}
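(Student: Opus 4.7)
The plan is to reduce Lemma~\ref{lem:comb-delta-revers} to a $d$-fold iteration of the argument used in Lemma~\ref{lem:comb-2delta-revers}. The goal is to show that $G_S \in \Delta_{AB}$ for every unordered pair of distinct centralities $A,B \in C$, and then to observe that $\Delta_C = \bigcap_{A \neq B \in C} \Delta_{AB}$, so that $G_S$ discriminates all centralities in $C$ simultaneously.

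First, I would fix an arbitrary pair $A,B \in C$ with $A \neq B$ and zoom in on the corresponding pairwise discriminating graph $G_{AB} \in D$. Since $G_{AB}$ is $\ell$-discriminating with $\ell > k$, by definition there exist anchor witnesses $i_{AB},j_{AB}$ both at hop-distance strictly greater than $k$ from $m_{AB}$ in $G_{AB}$, whose relative ranks are swapped between $f_A$ and $f_B$. The key geometric observation is that in the combined graph $G_S$, each block $G_{AB}$ is attached to the rest of the construction only through the single edge $(a,m_{AB})$, so any path leaving $V(G_{AB})$ must traverse $m_{AB}$. Hence any walk of length at most $k$ starting from $i_{AB}$ (resp.\ $j_{AB}$) remains entirely inside $G_{AB}$, because reaching $m_{AB}$ alone already costs more than $k$ steps.

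Second, I would invoke the $k$-locality of $A$ and $B$ (both are $k$-local since $k$ is the maximum locality over $C$): from $V^k(i_{AB},G_S) = V^k(i_{AB},G_{AB})$ and the analogous identity for $j_{AB}$, one immediately gets $f_A(i_{AB},G_S)=f_A(i_{AB},G_{AB})$, $f_A(j_{AB},G_S)=f_A(j_{AB},G_{AB})$, and the matching equalities for $f_B$. The rank reversal witnessing $G_{AB} \in \Delta_{AB}$ therefore transfers verbatim to $G_S$, giving $G_S \in \Delta_{AB}$. Quantifying over all pairs yields $G_S \in \Delta_C$, which is the claim.

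The main obstacle will be the locality bookkeeping in the second step: one must rule out that the star around the central node $a$ creates undesired short paths between witness anchors belonging to two different blocks $G_{AB}$ and $G_{CD}$, which could in principle perturb the values of $f_A$ or $f_B$ at $i_{AB}$ or $j_{AB}$. The $\ell > k$ separation buys exactly the margin needed, since even the shortest route from $i_{AB}$ out of its block passes through $m_{AB}$ at cost $> k$, and then through $a$ at cost $> k+1$. Stating this cleanly as a short connectivity claim, and observing that Lemma~\ref{lem:comb-2delta-revers} is essentially the $d=3$ instance of the same argument, should close the proof.
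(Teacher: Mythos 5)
Your proposal is correct and follows essentially the same route as the paper, which simply declares the proof of Lemma~\ref{lem:comb-delta-revers} to be identical to that of Lemma~\ref{lem:comb-2delta-revers}: fix a pair $A,B$, use the $\ell>k$ separation of the witnesses from the anchor $m_{AB}$ so that their $k$-hop neighborhoods are unchanged in $G_S$, invoke $k$-locality to transfer the rank reversal, and intersect over all pairs. Your explicit remark that every path leaving a block must pass through its anchor (and then $a$) is just a spelled-out version of the same locality bookkeeping the paper leaves implicit.
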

\begin{proof}(sketch): identical to Lemma~\ref{lem:comb-2delta-revers}.  
\end{proof}
The $G_S$ construction therefore allows for any set of centralities,
given pairwise discriminating graphs, to construct one general
discriminating graph achieving the reverse engineering
process. Note that the complexity is
quadratic: a graph to compare $d$ centralities requires $\Omega(d^2)$ pairwise
discriminating graphs.

We are now ready to propose a general method to infer which
  centrality is in use in $f$. It is shown in
  Algorithm~\ref{algo-general}.

\begin{theorem}
  Let $G_\infty$ a graph, and $r$ an unknown ranking function relying on
  centrality $z$. If $z\in C$ then Algorithm~\ref{algo-general} returns $z$.
\end{theorem}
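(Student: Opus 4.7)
The plan is to reduce correctness of Algorithm~\ref{algo-general} to Lemma~\ref{lem:comb-delta-revers} and then argue that reading the observed ranking $r$ on the pairwise anchors uniquely fingerprints the centrality in use. The first step is to apply Lemma~\ref{lem:comb-delta-revers} to the set $D$ of pairwise discriminating graphs built by the algorithm: attaching each $G_{XY}\in D$ to $G_\infty$ through the common node $a$ produces a graph $G_S\in\Delta_C$. Concretely, for every ordered pair $(X_a,X_b)$ of distinct centralities in $C$, the distinguished nodes $i_{X_aX_b},j_{X_aX_b}$ still satisfy $f_{X_a}(i)>f_{X_a}(j)$ and $f_{X_b}(i)<f_{X_b}(j)$ when evaluated in the enlarged graph $G_\infty\cup G_S$. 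The $\ell>k$-discrimination hypothesis is what makes this work: it forces every anchor to sit at hop-distance strictly greater than $k$ from $a$, so the $k$-hop neighborhood it sees in its own $G_{XY}$ is identical to the one it sees in $G_\infty$, and by $k$-locality the values $f_X(i),f_X(j)$ are preserved.

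Next I would read the ranking $r=r(f(G_\infty))$ off these anchors. Writing the hidden centrality as $z=X_s$ for a unique index $s$, the observed comparison between $i_{X_aX_b}$ and $j_{X_aX_b}$ is governed by $f_{X_s}$ alone. For the row $a=s$ we have by construction $f_{X_s}(i_{X_sX_b})>f_{X_s}(j_{X_sX_b})$ for every $b>s$, so each entry $M_{s,b}$ evaluates to \emph{true}. Conversely, for any $a<s$ the pair $(i_{X_aX_s},j_{X_aX_s})$ is precisely the one on which $X_a$ and $X_s$ disagree, so the observed $X_s$-ranking forces $M_{a,s}=\mathit{false}$. Thus row $s$ is the only row that is consistent (all-true on the pairs that it labels) with the observation, and the algorithm returns $X_s=z$.

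The principal obstacle I foresee is essentially bookkeeping rather than mathematical: the matrix $M$ is populated only above the diagonal, and the ``$\forall k,\ M_{s,k}=\mathit{true}$'' test has to be read under a convention that accounts for which centrality plays the role of ``$X$'' versus ``$Y$'' in each pair $G_{XY}$. Once that convention is fixed, uniqueness of $s$ is immediate from the fact that the $d(d-1)/2$ pairwise witnesses produced by $D$ collectively determine the hidden centrality, and the only substantive ingredient beyond Lemma~\ref{lem:comb-delta-revers} is the $k$-locality of all candidate centralities, which guarantees that neither the attachment to $a$ nor the presence of $G_\infty$ can corrupt the pairwise orderings on the anchors.
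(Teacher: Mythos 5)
Your proposal is correct and takes essentially the same approach as the paper: it uses Lemma~\ref{lem:comb-delta-revers} (via $\ell>k$-discrimination and $k$-locality) to argue the pairwise anchor orderings survive in the combined query graph, and then shows that row $z$ of $M$ is all-true while every other row fails at its column-$z$ entry, so the algorithm returns $z$. The only differences are presentational — you argue directly where the paper argues by contradiction, and the upper-triangular bookkeeping of $M$ that you flag is glossed over in the paper's proof as well.
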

\begin{proof}
  First, observe that in Algorithm~\ref{algo-general}, lines $1-5$ implement the
  construction of a combined Delta-reversal graph as defined in
  Lemma~\ref{lem:comb-delta-revers}. Line $6$ collects the resulting
  ranking. Consider $M$ at line 12. For $z$ to be correctly identified, two
  conditions must hold:
  $i)$ the line $M_{z,.}$ contains only entries at true, and $ii)$ all other
  lines $M_{i,.},i\neq z$ contain at least one false entry.

  Consider line $M_{z,.}$. By contradiction, assume that one entry, say $j$ is
  false. Then necessarily $i_{z,j}<_rj_{z,j}$ line 11. Since $r$ is obtained
  using $z$, we deduce $f_z(i_{z,j})<f_z(j_{z,j})$. This contradicts the
  definition of $i_{z,j}$ and $j_{z,j}$ that are chosen line 1 in the subgraph
  $G_{zj}$ such that $f_z(i_{z,j})>f_z(j_{z,j})$. We conclude that $M_{z,.}$
  contains only true entries.

  Now, assume there exists another line, say $i$, such that $M_{i,.}$ contains only true
  entries. Consider column $z$: we have $M_{i,z}=$true. As in the previous step,
  we deduce $f_z(i_{i,z})>f_z(j_{i,z})$; this again contradicts the definition
  of $i_{i,z}$ and $j_{i,z}$ chosen line 1 in $G_{iz}$ such that
  $f_z(i_{i,z})<f_z(j_{i,z})$. Thus every other line has at least a negative
  entry.

  Therefore, we conclude that $X_s=z$ line 13: Algorithm~\ref{algo-general} has
  identified $z$.
\end{proof}
The sketch presented in Algorithm~\ref{algo-general} can be optimized in many
ways. First, one can build the query graph incrementally and only test the
relevant centralities: let $G_{ab}$ be the first added Delta-reversal graph line
4 and 5. It is possible to test directly the value of $M_{ab}$. Assume
$M_{a,b}$=False, then necessarily centrality $X_a$ is not used in $f$. There is
therefore no need to add any other $G_{ac},\forall c\in D$ graph.

Second, observe that we focus on pairwise Delta-reversal graphs. Some
Delta-reversal graphs allow to differentiate between more than two centralities
(for instance, the graph $G_Q$ Figure~\ref{fig:sub1} that allows to
differentiate between $5$ centralities at once, while containing only 5
nodes). Using such graphs drastically reduces the size of final the query
graph.

\section{Reverse Engineering a Linear Combination of Centralities}
\label{sec:gener-revers-appr}
In the previous section, we have seen how to identify which centrality
  is used given a finite set of suspected centralities.
We now propose a method for extending to a $f$ that is a
linear combination of suspected centralities, for it allows more complex and subtle ranking functions.

\begin{figure}[t!]
 \removelatexerror
  \centering
  \begin{algorithm}[H]
    \KwData{$G_{\infty}$, a target node $a\in V(G_{\infty})$, operations $\{u_1,\ldots,u_d\}$}
    \KwResult{An estimate of $\mathbf{h}$ (\ie, the vector containing the weight of each centrality in $f$)}
    Let $\mathbf{k}$ be a vector of size $d-1$ initialized to $0$\;
    \For{$1\leq i\leq d$}{ 
      \textit{//attach a query node to target node $a$, and conduct operations over it}\\
      Create node $a_i: V(G_{\infty}) \gets V(G_{\infty})
      \cup\{a_i\}$\; Add edge $(a_i,a): E(G_{\infty}) \gets E(G_{\infty}) \cup\{(a_i,a)\}$\; Apply
      $u_i(a_i)$\; }
          W.l.o.g., $u_d$ is the operation with the highest impact (that
          is at this step $\forall j<d, a_j<_r a_d$); Reorder otherwise\;
    \For{$i=1$ to $d-1$}{
          \textit{//identify operation thresholds}\\ 
          $\mathbf{k}_{i} \gets max_{x\geq 1}(u_i^{x}(a_i)<_r
          u_d(a_d))$\;}
\textit{//$J$ is the matrix where each element $(i,j)$ is the
    impact of $k_i$ applications of $u_i$ on the $j^{th}$ centrality of node $a_i$, minus the impact of operation $u_d$ on $a_d$\;}
        Let $J_{i,j} = c^j(u_i^{k_i}(a_i)) - c^j(u_d(a_d))$ \;
        Set $J_{d,.} = 0^d$ \;
    \Return{Ker($J$)}    \textit{//find  $\mathbf{h}$ s.t. $J.\mathbf{h}=0$, thus is solution to the reverse engineering of $f$}
    \caption{A general reverse engineering algorithm, estimating the linear weight combination of centralities in $f$.}
    \label{algo}
  \end{algorithm}
\end{figure}

As the space of possible centralities is
theoretically infinite, we assume the user takes a bet on a
possibly large list of $d$ centralities in a set $C$, that are
potentially involved in $f$. We will show that our approach also allows to infer
the absence of significant impact of a given centrality in set $C$,
and thus conclude that it is probably not used in $f$.

In a nutshell, the query proceeds as follows. The user leverages an
arbitrary node $a$, already present in $G_\infty$. She then creates
$d$ identical nodes (\ie, profiles) and connects them to $a$. The
ranking of those $d$ created nodes is thus the same, by
construction. She applies to each node a different serie of API calls
(\ie, topological operations, attaching them one node for
instance). After each serie, ranking of those nodes changes. Based on
those observed changes, she is able to sort the impact of those calls,
and thus to describe the impact of one given call by a composition of
smaller effect calls. This allows her to retrieve the weights assigned
by $f$ to the $d$ centralities in set $C$, by solving a linear
equation system.

Lets consider the following image: imagine you have an old weighing
scale (that only answers ``left is heavier than right'' or vice-versa)
and a set of fruits (say berries, oranges, apples and melons) you want
to weigh. Since no ``absolute'' weighing system is available, the
solution is to weighs the fruits relatively to each other, for
instance by expressing each fruit as a fraction of the heaviest fruit,
the melon. One straightforward approach is to directly test how many
of each fruit weigh one melon.  This is the approach adopted
here. However, the problem here is that in general, we are not able to
individually weigh each fruit (centrality). Instead, we have a set of
$d$ different fruit salads. This is not a problem if the composition
of each salad is known (\ie, the impact of API calls); one has to
solve a linear system: there are $d$ different combinations that are
equal, thus providing $d$ equations.

\paragraph{A reverse engineering algorithm}
Black-box function $f$ relies on arbitrary centralities chosen from a set we denote $C$
of size $|C| = d$. Let $\mathbf{c_i}\in \mathbb{R}^d$ be the $d$ dimensional
column vector representing each of the $d$ computed centrality values
for a node $i$ in $G_{\infty}$.

We assume that $f$ is \textbf{linear} in all directions
(\ie, $f$ is a weighted mean of all centralities): $\exists
\mathbf{h}\in \mathbb{R}^d$ s.t. $f(i) =
\mathbf{c_i}.\mathbf{h}$. Reverse engineering the topological impact
over the final ranking thus boils down to find $\mathbf{h}$ (and therefore
directly obtain $f$).$\mathbf{h}$ is then the vector of coefficients corresponding to centralities listed in C.
The user performs operations on $G_{\infty}$ through API calls, starting from an existing node $a$.
We assume she is able to find $d$ different operations denoted
$\{u_1,\ldots,u_d\}$. Consider for instance one operation of that set,
noted $u_1(i)$, and that simply adds a neighbor to node $i$:
$G_{\infty}(V,E),i \rightarrow^{u_1(i)}
(V\cup\{a\},E\cup\{(i,a)\})$. Such an operation has an impact on $i$'s
topological role in $f$, let $\mathbf{u} \in \mathbb{R}^d$ be this
impact on all centralities in set $C$:
$\mathbf{c_i} \gets \mathbf{c_i}+\mathbf{u}$.

Regarding those operations, we assume that :
\textit{i)} the user is able to determine the result of each $u_i$'s impact on
  her created node's centrality values (\ie, she computes $u_i^k(i), \forall
  i\leq d$, and where $k>0$ is the number of applications of the operation), and
\textit{ii)} they are linearly independent: each operation has a
  unique impact on computed centralities from set $C$.

The query proceeds as shown on Algorithm~\ref{algo}, where notations are defined. First, observe that
by construction $rank(J)\leq d-1$. The last operation $u_d$ is the
reference against which we compare other operations.  Line 12 records the
maximum number of same $u_i$ operation applications that lead to the same rank (or
close) than a single $u_d$ operation on another node.

\begin{figure}[t!]
  \centering
$$ 
\small 
  \underbrace{ 
 \begin{pmatrix}
  j_{1,1} & j_{1,j} & \cdots & j_{1,n-1} \\
  j_{2,1} & j_{2,j} & \cdots & j_{2,n-1} \\
  \vdots  & j_{i,j}=c^j(u_i^{k_i}(a_i)) - c^j(u_d(a_d)) & \ddots & \vdots  \\
  0 & 0 & \cdots & 0 \\
 \end{pmatrix}
             }_{J}
  \underbrace{ 
 \begin{pmatrix}
 h_{0} \\
 h_{1} \\
 \vdots  \\
 h_{d-1} 
 \end{pmatrix}
             }_{\mathbf{h}}
= 0
$$

  \caption{Solution to reverse engineer $f$,  uncovering $\mathbf{h}$.}
  \label{fig:matrices}
\end{figure}

Consider a line $i$ of $J$ (L.12 Algorithm~\ref{algo}, also represented on Figure~\ref{fig:matrices}).  Since at the end $a_i =_r a_d$ (or
close), we have $(c_{a_i}+u_i^{\mathbf{k}_{i}}(a_i))\mathbf{h} =
(c_{a_d}+u_d(a_d))\mathbf{h} \pm u_i(a_i).\mathbf{h}$. Since by construction
$c_{a_i}=c_{a_d}$, therefore we seek $\mathbf{h}$
s.t. $u_i^{\mathbf{k}_{i}}(a_i)\mathbf{h} - u_d(a_d)\mathbf{h}= 0$. Or
 matrix notation: $J.\mathbf{h} =0$: $\mathbf{h}$ is in the kernel
of $J$.

Intuitively, the fact that we get infinitely many solutions
($\alpha.\mathbf{h}, \forall \alpha \in \mathbb{R}^+$) comes from our
observation method: we are never able to observe actual scores,
but rather rankings. Since multiplying $\mathbf{h}$ by a constant does not change the final ranking, any
vector co-linear to, \eg, $\mathbf{h}/||\mathbf{h}||$ is a solution.

One important remark is that one cannot formally claim that one
centrality metric is not in use in $C$ with this algorithm. Assume for
instance that one centrality, \eg, number of tweets of the considered
node, is $10^3$ times less important than another centrality, \eg,
degree. Then we will not be able to witness its effect unless we
produce $10^3$ tweets. And after $10^2$ tweets, we will only be able
to conclude: number of tweets is at least $10^2$ times less important
that degree.  One can reasonably assume that such an imbalance in
practice means that one service operator will not compute a possibly
costly centrality to use it to such a low extent in $f$; this thus
makes our algorithm able to discard barely or not used centralities in
$C$.
Finally, we note that with $cost(u_i)$ the number of calls issued by
operation $u_i$, the total number of operations for weighting
two centralities in $C$ is at most $cost(u_d)+ \sum_{i=1}^{d-1}
k_i.cost(u_i)$.

\paragraph{Exploiting local centralities: an illustration}

We demonstrated how to reverse engineer a linear
combination of centralities. The difficulty for the user is to compute
the impact of $u$ operations on the suspected centralities. In the
easiest case, suspected centralities behave linearly (such as \eg,
degree, betweenness), and it is therefore easy to compute the impact
of an update. The case of non-linearity can be solved using the
locality of centralities: if $c$ is $k$-local, the observation of the
$k$-hop neighborhood of a node is required to reverse engineer $f$. We
illustrate this on a simple example.

\begin{figure}[h!]
  \centering
         \begin{tikzpicture}[shorten >=1pt,auto]
           \tikzstyle{vertex}=[circle,fill=black!25,minimum
size=17pt,inner sep=0pt]
           \tikzstyle{graph}=[circle,fill=black!10,minimum size=23pt,inner
           sep=0pt]
           \tikzstyle{upda}=[star,line width= 5pt,
color=blue,fill=black!10!white,minimum size=23pt,inner
           sep=0pt,star points=7]
           \node[graph] (g) at (0,0.8) {$G_{\infty}$};
           \node[vertex] (a) at (0,0) {$a$};
           \node[vertex] (n1) at (-2,-1) {$a_1$};
           \node[vertex] (n2) at (2,-1) {$a_2$};
           \foreach \from/\to in {a/g,a/n1,a/n2}
           \draw (\from) -- (\to);

\node[vertex] (n1u1) at (-2.8,-2) {$a_{11}$}; 
\node[vertex] (n1u2) at (-1.8,-2) {$a_{12}$}; 
\node[vertex] (n1u3) at (-0.8,-2) {$a_{13}$};
             \draw (n1) -- (n1u1);
             \draw (n1) -- (n1u2);
             \draw (n1) -- (n1u3);
             \draw[]  (n1u1) to node [auto] {} (n1u2);
             \draw[in=30,]  (n1u1) to node [auto] {} (n1u3);

\node[vertex] (n2u1) at (0.7,-2) {$a_{21}$}; 
\node[vertex] (n2u2) at (1.7,-2) {$a_{22}$}; 
\node[vertex] (n2u3) at (2.7,-2) {$a_{23}$};
             \draw (n2) -- (n2u1);
             \draw (n2) -- (n2u2);
             \draw (n2) -- (n2u3);

             \node[upda] (u1) at (-1.5,-3) {$u_1$};
             \node[upda] (u2) at (2,-3) {$u_2$};
         \end{tikzpicture}
  \caption{Querying $G_{\infty}$: conducing two sequences of operations $u_1$ and $u_2$, attaching them to $a$.}
  \label{fig:general}
\end{figure}

Let us assume a ranking function $f$ whose internals use a combination
of $c_1$: $clustering$ centrality\footnote{this centrality has no
    linear behavior, but is $1-$local.} and $c_2$: $degree$ (\ie,
$C=[c_1,c_2]$). Without loss of generality, we assume that the
coefficient for degree in $\mathbf{h}$ is $h_1=1$, so that we seek the
corresponding coefficient $h_2=h$.  Let us consider the following two
operations in $\{u_1,u_2\}$.
Operation $u_2$ simply attach a node to one initial query node
  ($a1$ or $a2$). $u_1$ starts by attaching $querySize-1$ nodes to
  query node $a1$. At each call, an edge between two
  randomly selected attached nodes is added, to increase clustering. $u_1$ and
  $u_2$ are represented on Figure~\ref{fig:general}, for a
  $querySize=4$. User can compute the value of
  $u_1^{k-1}(a_1)$ and $u_2^{k-1}(a_2)$ at any time, since she
  controls those nodes.  

We simulated the query with a $G_\infty$ being a $1,000$ nodes
Barab\'asi-Albert graph with an average degree of $5$, estimating $h$
using $u_1$ and $u_2$ operations with Algorithm 1. Figure
\ref{fig:sub2} presents the obtained results: a point $(x,y)$ means
the real value of $h$ is $x$ and was estimated by Algorithm 1 as
$y$. Black dots plot the real coefficient values of $h$. Each colored
area represents the estimated (reverse engineered) coefficients, while
each color represents a query size, \ie, the number of nodes created
by the user to reverse $f$. The larger the query, the more
precise the reverse engineered results. We note that if the real
values of coefficients to be estimated are bigger (e.g., 4 or 5 on the
$x$-axis), estimations show lower precision (larger areas). Despite
this remark, estimations appear unbiased.

\section{Discussion}
The will for web-services transparency starts to trigger new research
works. Security-oriented paper~\cite{stealing} has shown that it is
possible to ``steal'' some machine learning models from online
services, using a reasonable number of queries to APIs.
XRay~\cite{chaintreau} for instance proposes a correlation algorithm,
that aims at inferring to which data input is associated a
personalized output to the user. This Bayesian-based algorithm returns
data that are the cause of received ads, while we seek in this paper
to retrieve the internals of a black-box ranking function, in order to
assess what is the effect of user actions on the output peer ranking.
We have presented a general framework. Based on the centralities that
might be used by the ranking function, there remain work for a user,
for building discriminating graphs, and for finding small topological
operations that will make the reverse engineering possible.  
For a
ranking service operator, the countermeasure is the opposite: she must
find ranking metrics that are computationally hard to distinguish,
typically ones that would ensure the detection of the querying user by
the internal security system. We find this to be an interesting
challenge for futureworks
\begin{figure}[t!]
  \centering
  \includegraphics[width=1\linewidth]{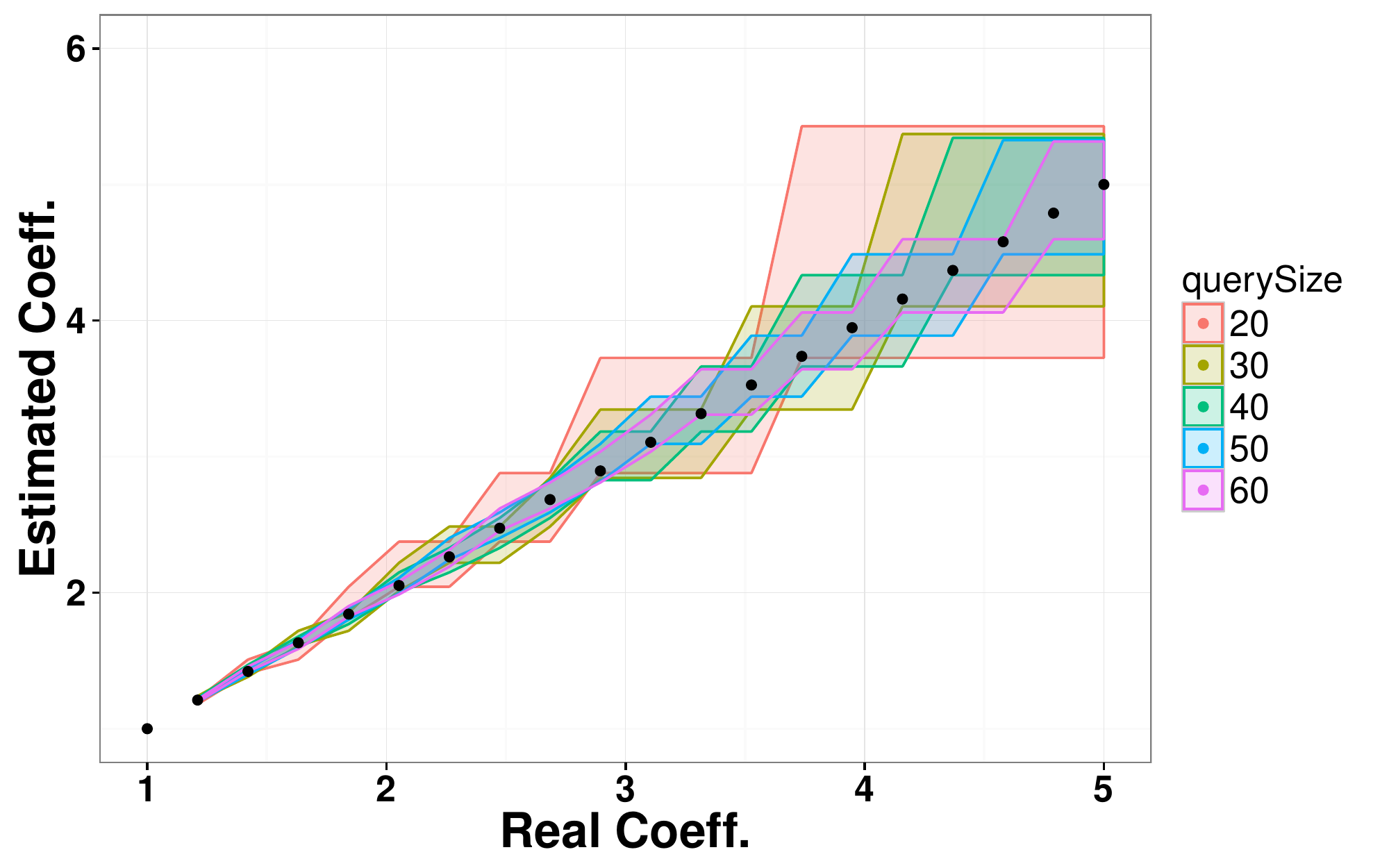}
  \caption{Reversing a $f$ with unknown coefficients from $1$ to $5$, with various query sizes (node creations.)}
  \label{fig:sub2}
\end{figure}

\bibliographystyle{abbrv}
\bibliography{biblio}

\end{document}